\newtheorem{proposition}{Proposition}
\newtheorem*{proposition*}{Proposition}
\newtheorem{definition}{Definition}
\newtheorem{theorem}{Theorem}
\newtheorem*{theorem*}{Theorem}
\newtheorem*{corollary*}{Corollary}
\newcommand{\ket}[1]{\left\vert#1\right\rangle}
\newcommand{\bra}[1]{\left\langle#1\right\vert}
\def\Tr{\mathrm{Tr}}
\begin{document}
\title{Entanglement as the symmetric portion of correlated coherence}
\author{Kok Chuan Tan}
\email{bbtankc@gmail.com}
\author{Hyunseok Jeong}
\email{jeongh@snu.ac.kr}
\affiliation{Center for Macroscopic Quantum Control \& Institute of Applied Physics, Department of Physics and Astronomy, Seoul National University, Seoul, 151-742, Korea}
\date{\today}

\begin{abstract}
We show that the symmetric portion of correlated coherence is always a valid quantifier of entanglement, and that this property is independent of the particular choice of coherence measure. This leads to an infinitely large class of coherence based entanglement monotones, which is always computable for pure states if the coherence measure is also computable. It is already known that every entanglement measure can be constructed as a coherence measure. The results presented here show that the converse is also true. The constructions that are presented can also be extended to also include more general notions of nonclassical correlations, leading to quantifiers that are related to quantum discord.
\end{abstract}

\maketitle

\section{Introduction}

An important pillar in the field of quantum information is the study of the quantumness of correlations, the most well known of which is the notion of entangled quantum states~\cite{Einstein1935}. Entanglement is now the basis of many of the most useful and powerful quantum protocols, such as quantum cryptography~\cite{Ekert1991}, quantum teleportation~\cite{Bennett1991} and superdense coding. In the past several decades, generalized notions of quantum correlations that include but supersede entanglement have also been considered, most prominently in the form quantum discord~\cite{Ollivier2001, Henderson2001}. There is mounting evidence that such notions of quantum correlations can also lead to nonclassical effects in multipartite scenarios~\cite{Datta2008, Chuan2012, Dakic2012, Chuan2013}, even when entanglement is not available.

In a separate development, the past several years has also seen a growing amount of interest in the recently formalized resource theory of coherence~\cite{Aberg2006,Baumgratz2014, Levi2014}. Such theories are primarily interested in identifying the quantumness of some given quantum state, and is not limited to a multipartite setting as in the case of entanglement or discord. Nonetheless, there is considerable interest in the study of correlations from the point of view of coherence~\cite{Streltsov2015, Tan2016, Ma2016, Tan2018}. In this picture, one may view quantum correlations as a single aspect of the more general notion of nonclassicality, which in this article we will assume to imply coherence. Beyond the study of quantum correlations, the resource theory of coherence has been applied to an ever increasing number of physical scenarios, ranging from macroscopicity~\cite{Yadin2016, Kwon2017}, to quantum algorithms~\cite{Hillery2016,Matera2016}, to interferometry~\cite{YTWang2017}, to nonclassical light~\cite{Tan2017, Zhang2015, Xu2016}. \cite{Streltsov2017} provides a recent overview of the developments to date. Especially relevant are the results in~\cite{Streltsov2015}. There, it was shown that coherence can be faithfully converted into entanglement, and that each entanglement measure corresponds to a coherence measure in the sense of \cite{Baumgratz2014}.

In this article, we report a series of constructions which allows notions of nonclassical correlations to be quantified using coherence measures. The arguments are general in the sense that it does not depend on the particular coherence measure used, and does not even depend on the particular flavour of coherence measure that is being employed, so long as they satisfy some minimal set of properties that a reasonable coherence measure should satisfy. This suggests that notions of entanglement and discord are intrinsically tied to any reasonable resource theory of coherence. In essence, our results establish that the converse of the relationship proposed in~\cite{Streltsov2015} is also true, so that for every coherence measure, there corresponds an entanglement measure. In fact, we also go beyond this by demonstrating that not only does this correspondence exist for the coherence resource theory proposed in \cite{Baumgratz2014}, as our framework does not depend on the choice of free operations required by a particular resource theory \cite{Streltsov2017}. In addition, as natural consequences of our  operation-independent approach, we also see that discord-like quantifiers of nonclassical correlations are naturally embedded in any such resource theories of coherence.

This operation-free approach contrasts with other approaches considered in~\cite{Chitambar2016, Streltsov2016}, where coherence and entanglement are bridged by forming a hybrid resource theory based on some combination of free operations from both theories. Such a hybridization approach often require additional constraints, such as requiring that operations are both local on top of being incoherent, which may bring about extra complications. For instance, it is sometimes difficult to physically justify the set of operations being considered in the hybridized theory, and one may also have to deal with the accounting of not one, but two, species of resource states (i.e. simultaneously keep track of available maximally coherent qubits on top of maximally entangled qubits). 

\section{Preliminaries}

We review some elementary concepts concerning coherence measures. Coherence is a basis dependent property of a quantum state. For a given fixed basis $\mathcal{B} = \{ \ket{i} \}$, the set of incoherent states $\cal I$ is the set of quantum states with diagonal density matrices with respect to this basis, and is considered to the the set of classical states. Correspondingly, states that have nonzero off diagonal elements form the set of coherent states that are nonclassical.

The notion of nonclassicality from the point of view of coherence is an unambiguous aspect of all coherence theories, but different flavors of coherence resource theories sometimes consider different sets of non-coherence producing operations in order to justify different coherence measures (See~\cite{Streltsov2017} for a summary). For our purposes, we will not require any specific properties of such non-coherence producing operations. The following is a  set of axioms that such resource theories of coherence generally obeys: Let $\mathcal{C}$ be a measure of coherence belonging to some coherence resource theory, then $\mathcal{C}(\rho)$ must satisfy
(C1) $\mathcal{C}(\rho) \geq 0$ for any quantum state $\rho$ and equality holds if and only if $\rho \in \cal I$.
(C2) The measure is non-increasing under a non-coherence producing map $\Phi$ , i.e., $C(\rho) \geq C(\Phi(\rho))$.
(C3) Convexity, i.e. $\lambda C(\rho) + (1-\lambda) C(\sigma) \geq C(\lambda \rho + (1-\lambda) \sigma)$, for any density matrix $\rho$ and $\sigma$ with $0\leq \lambda \leq 1$.

The following quantity was considered by Tan {\it et al.}~\cite{Tan2016} while studying the relationship between coherence and quantum correlations: $$\mathcal{C}(A:B \mid \rho_{AB}) \coloneqq \mathcal{C}(\rho_{AB}) - \mathcal{C}(\rho_{A}) - \mathcal{C}(\rho_{B}).$$

This quantity was referred to as correlated coherence, and the coherence measure $\mathcal{C}$ in \cite{Tan2016} was chosen to be $l_1$-norm of coherence. There, it was noted that since it is always possible to choose a local basis for the subsystems $A$ and $B$ where $\mathcal{C}(\rho_{A})$ and $\mathcal{C}(\rho_{A})$ vanishes, the coherence in the system is apparently no longer stored locally, and must exist amongst the correlations between subsystems $A$ and $B$. 

Based on this observation it was demonstrated there that if one were to minimize the correlated coherence with respect to all such possible local bases, i.e. all possible local bases $\mathcal{B}_A$ and $\mathcal{B}_B$ satisfying $\mathcal{C}(\rho_{A})=\mathcal{C}(\rho_{A})=0$, then the minimization over all such bases may be related to quantum correlations such as discord and entanglement. Formally, they considered the quantity:

\begin{definition}[Correlated coherence] $$\mathcal{C}_{\mathrm{min}}(A:B \mid \rho_{AB}) \coloneqq \min_{\mathcal{B}_{A:B}}\mathcal{C}(A:B \mid \rho_{AB}),$$ where the minimization is performed over the set of local bases $\mathcal{B}_{A:B} \coloneqq \{ (\mathcal{B}_A, \mathcal{B}_B) \mid \mathcal{C}(\rho_{A})=\mathcal{C}(\rho_{B})=0 \}.$ \end{definition}

 The quantity is invariant under local unitary operations, since it is clear that for any state $\rho_{AB}$ and local basis $\mathcal{B}_{A:B} = \{\ket{i}_A\ket{j}_B \}$, the correlated coherence for the state $U_A \rho_{AB} U^\dag_A$ and basis $\mathcal{B}_{A:B} = \{U_A\ket{i}_A\ket{j}_B \}$ is identical. Subsequently, entropic versions of correlated coherence was also studied in \cite{Wang2017} and more recently in \cite{Kraft2018, Ma2018}, where operational scenarios were considered.

In the next section, we prove that using $\mathcal{C}_{\mathrm{min}}(A:B \mid \rho_{AB})$ as our basic building block, every coherence measure can be used to construct a valid entanglement quantifier, which establishes that entanglement may be interpreted as the symmetric portion of correlated coherence. 

\section{Quantifying entanglement with correlated coherence}

We begin with some necessary definitions:

\begin{definition}[Symmetric extensions]
A symmetric extension of a bipartite state $\rho_{A_1B_1}$ is an extension $\rho_{A_1\ldots A_n B_1\ldots B_n}$ satisfying $\Tr_{A_2 \ldots A_n B_2 \ldots B_n}(\rho_{A_1\ldots A_n B_1\ldots B_n}) = \rho_{A_1B_1}$ that is, up to local unitaries, invariant under the swap operation $\Phi_{\mathrm{SWAP}}^{A_i \leftrightarrow B_i}$ between any subsystems $A_i$ of Alice and $B_i$ of Bob, i.e. there exists some unitary $U_{A_1 \ldots A_n}$ such that $$\Phi_{\mathrm{SWAP}}^{A_i \leftrightarrow B_i}(U_{A_1 \ldots A_n}\rho_{A_a \ldots A_n B_1\ldots B_n}U^\dag_{A_1 \ldots A_n}) = U_{A_1 \ldots A_n}\rho_{A_a \ldots A_n B_1\ldots B_n}U^\dag_{A_1 \ldots A_n}$$

\end{definition}

A symmetric extension is therefore, up to a local unitary on Alice's side (or Bob's side), an extension of the quantum state that, up to local unitaries, exists within the symmetric subspace. Subsequently, for notational simplicity, we will use unprimed letters $A,B$ for the system of interest, and and primed letters $A', B'$ for the ancillas in the extension. Let us now consider the correlated coherence for such extensions.

\begin{definition}[Symmetric correlated coherence]
The symmetric correlated coherence, for any given coherence measure $\mathcal{C}$, is defined to be the following quantity: $$E_{\mathcal{C}}(\rho_{AB}) = \min_{A'B'}\mathcal{C}_{\mathrm{min}}(AA':BB' \mid \rho_{AA'BB'})$$ where the minimization is performed over all possible symmetric extensions of $\rho_{AB}$.  Note that the ancillas $A'$ and $B'$ may, in general, be composite systems.
\end{definition}

The above definition quantifies the minimum correlated coherence that exists within a symmetric subspace of an extended Hilbert space, up to some local unitary on Alice's side or Bob's side. For this reason, we interpret this quantity as the portion of the correlated coherence that is symmetric.

For the rest of the note, we will prove several elementary properties of the above correlation measure, which will finally establish it as a valid entanglement monotone. 

First, we will demonstrate that $E_{\mathcal{C}}(\rho_{AB})$ is a convex function of states:

\begin{proposition}
$E_{\mathcal{C}}(\rho_{AB})$ is a convex function of state, i.e. $$\sum_i p_i E_{\mathcal{C}}(\rho^i_{AB}) \geq E_{\mathcal{C}}(\sum_i p_i\rho^i_{AB})$$ where $p_i$ defines some probability distribution s.t. $\sum_i p_i = 1$ and $\rho^i_{AB}$ is any normalized quantum state.
\end{proposition}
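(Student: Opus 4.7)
The plan is to combine near-optimal symmetric extensions of the constituent states $\rho^i_{AB}$ via a pair of classical flag registers to build a candidate symmetric extension of $\bar\rho_{AB}\coloneqq\sum_i p_i\rho^i_{AB}$ whose symmetric correlated coherence is controlled by $\sum_i p_i E_\mathcal{C}(\rho^i_{AB})$. Fix $\epsilon>0$ and for each $i$ choose an $\epsilon$-optimal symmetric extension $\rho^i_{AA'_iBB'_i}$ with local bases $(\mathcal{B}^i_{AA'_i},\mathcal{B}^i_{BB'_i})$ achieving $\mathcal{C}_{\min}(AA'_i:BB'_i\mid\rho^i)\le E_{\mathcal{C}}(\rho^i_{AB})+\epsilon$. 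Then define
$$\bar\rho_{AA'BB'}\coloneqq\sum_i p_i\,\rho^i_{AA'_iBB'_i}\otimes\pjct{i}{i}_{F_A}\otimes\pjct{i}{i}_{F_B},$$
where Alice's ancilla is the composite system $A'=\tilde A'\otimes F_A$ that contains a large enough register $\tilde A'$ to accommodate each $A'_i$ together with a classical flag $F_A$ (and symmetrically for Bob). Since the definition of symmetric extension explicitly allows composite ancillas, this is a legitimate candidate.

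The first step is to check that $\bar\rho_{AA'BB'}$ really is a symmetric extension of $\bar\rho_{AB}$: tracing out $A'B'$ recovers $\bar\rho_{AB}$, and under the swap $AA'\leftrightarrow BB'$ the symmetric flag component $\pjct{i}{i}_{F_A}\otimes\pjct{i}{i}_{F_B}$ is manifestly invariant, while each block $\rho^i_{AA'_iBB'_i}$ is swap-invariant up to its own block-local unitary $U^i_{AA'_i}$; assembling the flag-controlled unitary $U_{AA'}\coloneqq\sum_i U^i_{AA'_i}\otimes\pjct{i}{i}_{F_A}$ supplies the single local unitary required by the definition. The second step is to evaluate the correlated coherence of this extension in the flag-controlled basis $\{\ket{k;i}_{\tilde A'}\ket{i}_{F_A}\}$, where $\{\ket{k;i}\}_k$ is the basis $\mathcal{B}^i_{AA'_i}$ embedded in $\tilde A'$, with the analogous basis on $BB'$. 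Because $\bar\rho_{AA'}$ is block-diagonal in $F_A$ with blocks $p_i\rho^i_{AA'_i}$, each of which is diagonal in its own $\mathcal{B}^i_{AA'_i}$, one has $\mathcal{C}(\bar\rho_{AA'})=\mathcal{C}(\bar\rho_{BB'})=0$; this confirms the basis is admissible in the definition of $\mathcal{C}_{\min}$ and collapses the correlated coherence of the extension to $\mathcal{C}(\bar\rho_{AA'BB'})$ in this basis.

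The last step is to bound $\mathcal{C}(\bar\rho_{AA'BB'})$. Convexity (C3) gives $\mathcal{C}(\bar\rho_{AA'BB'})\le\sum_i p_i\,\mathcal{C}(\rho^i_{AA'_iBB'_i}\otimes\pjct{i}{i}_{F_A}\otimes\pjct{i}{i}_{F_B})$, and (C2) applied to the non-coherence-producing channel "append the incoherent flag states" strips the flags to yield $\mathcal{C}(\rho^i\otimes\pjct{i}{i}\otimes\pjct{i}{i})\le\mathcal{C}(\rho^i_{AA'_iBB'_i})$. Chaining these bounds, taking the infimum over all symmetric extensions on the left, and sending $\epsilon\to 0$ delivers the convexity inequality. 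The main obstacle I anticipate is not conceptual but bookkeeping: verifying that the flag-controlled basis is a genuine basis on the composite ancilla, and that appending an incoherent flag register is non-coherence-producing under the resource theory at hand — this is standard for every common variant of coherence theory (since tensoring with an incoherent state maps $\mathcal{I}$ to $\mathcal{I}$) but is not literally immediate from (C1)–(C3) alone and should be noted explicitly.
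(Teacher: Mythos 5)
Your proposal is correct and follows essentially the same route as the paper: attach classical flag registers $\ketbra{i}{i}_{F_A}\otimes\ketbra{i}{i}_{F_B}$ to (near-)optimal symmetric extensions, observe that the resulting mixture is itself a symmetric extension of $\sum_i p_i\rho^i_{AB}$ with vanishing local coherence in the flag-controlled basis, and invoke convexity (C3) of $\mathcal{C}$. The only cosmetic differences are your use of $\epsilon$-optimal extensions (avoiding the assumption that the minimum is attained) and stripping the flags via (C2) applied to an appending channel where the paper argues the same point by a ``relabelling of the basis''; both steps go slightly beyond (C1)--(C3) alone, as you correctly flag.
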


\begin{proof}
Let $\rho^{i*}_{AA'BB'}$ be the optimal extension such that $E_{\mathcal{C}}(\rho^i_{AB})= \mathcal{C}_{\mathrm{min}}(AA':BB' \mid \rho^{i*}_{AA'BB'})$. We have the following chain of inequalities:

\begin{align}
\sum_i p_i E_{\mathcal{C}}(\rho^i_{AB})& =  \sum_i p_i \mathcal{C}_{\mathrm{min}}(AA':BB' \mid \rho^{i*}_{AA'BB'}) \\
&= \sum_i p_i \mathcal{C}_{\mathrm{min}}(AA'A'':BB'B'' \mid \rho^{i*}_{AA'BB'}\otimes \ket{i,i}_{A''B''}\bra{i,i}) \\
&\geq  \mathcal{C}_{\mathrm{min}}(AA'A'':BB'B'' \mid \sum_i p_i \rho^{i*}_{AA'BB'}\otimes \ket{i,i}_{A''B''}\bra{i,i}) \\
&\geq  E_\mathcal{C}( \sum_i p_i \rho^i_{AB} )
\end{align}  

The inequality in Line 3 occurs because there is at least one local basis that is upper bounded by Line 2.  To see this, suppose for every $i$ and $\rho^{i*}_{AA'BB'}$, the optimal basis for evaluating $\mathcal{C}_{\mathrm{min}}(AA':BB' \mid \rho^{i*}_{AA'BB'})$ is $\{ \ket{\alpha_{i,j}}_{AA'} \ket{\beta_{i,k}}_{BB'} \}$. Then it is clear that the optimal local basis for $\rho^{i*}_{AA'BB'}\otimes \ket{i,i}_{A''B''}\bra{i,i}$ must be $\{ \ket{\alpha_{i,j}}_{AA'} \ket{i}_{A''}\ket{\beta_{i,k}}_{BB'} \ket{i}_{B''} \}$ since there was just essentially a relabelling of the basis. Since the coherence measure $\mathcal{C}$ is convex, the classical mixture of quantum states cannot increase the amount of coherence with respect to the basis $\{ \ket{\alpha_{i,j}}_{AA'} \ket{i}_{A''}\ket{\beta_{i,k}}_{BB'} \ket{i}_{B''} \}$. Finally, one can verify that the local coherences with respect to this basis is always zero, so this is just one particular local basis that satisfies the necessary contraints. In sum, this implies \begin{align*}\sum_i p_i &\mathcal{C}_{\mathrm{min}}(AA'A'':BB'B'' \mid \rho^{i*}_{AA'BB'}\otimes \ket{i,i}_{A''B''}\bra{i,i}) \\
&\geq  \mathcal{C}_{\mathrm{min}}(AA'A'':BB'B'' \mid \sum_i p_i \rho^{i*}_{AA'BB'}\otimes \ket{i,i}_{A''B''}\bra{i,i}),\end{align*} which was the required inequality.

The inequality in Line 4 comes from the observation that $\sum_i p_i \rho^*_{AA'BB'}\otimes \ket{i,i}_{A''B''}\bra{i,i}$ is a particular symmetric extension of $\sum_i p_i \rho^*_{AB}$. The final inequality is simply the condition of convexity which we needed to prove.
\end{proof}

In the next proposition, we demonstrate the connection between $E_{\mathcal{C}}(\rho_{AB})$ and nonseparability, which defines entanglement.

\begin{proposition} [Faithfulness]
 $E_{\mathcal{C}}(\rho_{AB}) = 0$ iff $\rho_{AB}$ is separable, and strictly positive otherwise.
\end{proposition}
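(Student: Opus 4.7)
The plan is to prove both directions of the iff, from which the strictly positive clause follows by contrapositive together with non-negativity. Non-negativity of $E_{\mathcal{C}}$ is essentially built into the definitions: the feasible local bases in $\mathcal{C}_{\mathrm{min}}(AA':BB'\mid\rho)$ are precisely those that zero out the marginal coherences, so on the feasible set $\mathcal{C}_{\mathrm{min}}$ reduces to $\mathcal{C}(\rho_{AA'BB'}) \geq 0$ by axiom (C1), and the further minimisation over symmetric extensions preserves this bound.

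For the easy direction, assume $E_{\mathcal{C}}(\rho_{AB}) = 0$ and let $\sigma_{AA'BB'}$ together with a product basis $\{\ket{\alpha_i}_{AA'}\otimes\ket{\beta_j}_{BB'}\}$ achieve the infimum. The identity $\mathcal{C}_{\mathrm{min}} = \mathcal{C}(\sigma) - \mathcal{C}(\sigma_{AA'}) - \mathcal{C}(\sigma_{BB'}) = 0$ together with vanishing marginal terms forces $\mathcal{C}(\sigma) = 0$, so by (C1) $\sigma$ must be diagonal in this product basis, i.e.\ a classical--classical and hence separable state across the $AA'\mid BB'$ cut. Since separability is preserved under partial trace, $\rho_{AB} = \Tr_{A'B'}\sigma$ is separable.

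The forward direction is the substantive step and calls for an explicit construction. Given a pure-product decomposition $\rho_{AB} = \sum_k p_k \ket{a_k}\bra{a_k}_A \otimes \ket{b_k}\bra{b_k}_B$, I would take
\begin{equation*}
\sigma_{AA'BB'} = \sum_k p_k \bigl(\ket{a_k}\bra{a_k}_A \otimes \ket{b_k}\bra{b_k}_{A'_1} \otimes \ket{k}\bra{k}_{A'_2}\bigr) \otimes \bigl(\ket{b_k}\bra{b_k}_B \otimes \ket{a_k}\bra{a_k}_{B'_1} \otimes \ket{k}\bra{k}_{B'_2}\bigr),
\end{equation*}
with composite ancillas $A' = A'_1 A'_2$ and $B' = B'_1 B'_2$: $A'_1, B'_1$ carry mirror copies of the opposite party's state and $A'_2, B'_2$ hold a classical flag indexing $k$. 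The partial trace over $A'B'$ recovers $\rho_{AB}$, and after applying the Alice-side subsystem swap between $A$ and $A'_1$, each $\Phi_{\mathrm{SWAP}}^{A_i \leftrightarrow B_i}$ acts trivially because the paired subsystems then carry identical operators, so $\sigma$ is a bona fide symmetric extension (dimensions padded to match if required). Choosing the local basis $\{\ket{a_k}_A \ket{b_k}_{A'_1} \ket{k}_{A'_2}\}_k$ for Alice and its mirror for Bob --- both completed to orthonormal bases, with orthonormality on the distinguished sets guaranteed by the flag --- simultaneously diagonalises $\sigma$ and both of its marginals. Axiom (C1) then zeroes out all three coherences, yielding $\mathcal{C}_{\mathrm{min}}(AA':BB'\mid\sigma) = 0$, and non-negativity pinches $E_{\mathcal{C}}(\rho_{AB})$ to zero.

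The delicate step is the forward construction, which has to meet two demands at once: (i) the extension must qualify as symmetric in the paper's sense, requiring swap invariance up to a local unitary on Alice while still tracing down to $\rho_{AB}$; and (ii) the same extension must admit a single local basis in which both the global state and both marginals are diagonal. The flagged mirror-copy construction is designed to handle both constraints simultaneously --- the mirroring collapses the required Alice-side unitary to a subsystem swap under which every $\Phi_{\mathrm{SWAP}}^{A_i \leftrightarrow B_i}$ becomes trivial, and the flag ensures orthonormality of the desired diagonalising basis even when the $\ket{a_k}$ or $\ket{b_k}$ are linearly dependent.
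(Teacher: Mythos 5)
Your proof is correct and follows essentially the same route as the paper: nonnegativity from (C1), an explicit flag-register extension of a separable decomposition to force $E_{\mathcal{C}}=0$, and the diagonality-plus-partial-trace argument for the converse. The only real difference is that your extension carries additional mirror copies on $A'_1,B'_1$ so the required Alice-side unitary is an explicit subsystem swap, whereas the paper uses the leaner extension $\sum_i p_i \ket{a_i}\bra{a_i}_A\otimes\ket{i}\bra{i}_{A'}\otimes\ket{b_i}\bra{b_i}_B\otimes\ket{i}\bra{i}_{B'}$ and leaves implicit the (true, flag-enabled) existence of a local unitary sending $\ket{a_i}_A\ket{i}_{A'}$ to $\ket{b_i}_A\ket{i}_{A'}$ — your version just makes that symmetry check more explicit.
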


\begin{proof}
First of all, we note that all coherence measures are nonnegative over valid quantum states, and as such, since $E_{\mathcal{C}}(\rho_{AB})$ is defined as a form of coherence over some quantum state, $E_{\mathcal{C}}(\rho_{AB})$ must be nonnegative.

Suppose some bipartite state $\rho_{AB}$ is separable. By definition, this necessarily implies that there exists some decomposition for which $\rho_{AB} = \sum_i p_i \ket{a_i}_A\bra{a_i} \otimes \ket{b_i}_B\bra{b_i}$. This always permits an extension of the form $\rho_{AA'BB'} = \sum_i p_i \ket{a_i}_A\bra{a_i} \otimes \ket{i}_{A'}\bra{i} \otimes  \ket{b_i}_B\bra{b_i} \otimes \ket{i}_{B'}\bra{i}$ for some orthonormal set $\{ \ket{i} \}$. 
It can then be directly verified that $\mathcal{C}_{\mathrm{min}}(AA':BB' \mid  \rho_{AA'BB'}) = 0$ so we must have $E_{\mathcal{C}}(\rho_{AB})=0$ for every separable state.

We now prove the converse. Suppose $E_{\mathcal{C}}(\rho_{AB})=0$. Then there must exist some extension for which  $\mathcal{C}_{\mathrm{min}}(AA':BB' \mid  \rho_{AA'BB'}) = 0$. This implies that there must exist a local basis on $AA'$ and on $BB'$ such that the coherence must be zero, so $\rho_{AA'BB'}$ necessarily must be diagonal in this basis, i.e. $\rho_{AA'BB'} = \sum_{i} q_i \ket{\alpha_i}_{AA'}\bra{\alpha_i} \otimes \ket{\beta_i}_{BB'}\bra{\beta_i}$. Directly tracing out the subsystems $A'$ and $B'$ will lead to a decomposition of the form $\rho_{AB} = \sum_i p_i \ket{a_i}_A\bra{a_i} \otimes \ket{b_i}_B\bra{b_i}$, so $\rho_{AB}$ must be a separable state.

We then observe that since $E_{\mathcal{C}}(\rho_{AB})$ must be nonnegative, and it is zero iff $\rho_{AB}$ is separable, then it must be strictly positive for every entangled state. This completes the proof.
\end{proof}

Finally, we show that $E_{\mathcal{C}}(\rho_{AB})=0$ always decreases under LOCC type operations.

\begin{proposition} [Monotonicity] \label{thm::monotonicity} For any LOCC protocol represented by a quantum map $\Phi_{\mathrm{LOCC}}$, we have $$E_{\mathcal{C}}( \rho_{AB}) \geq E_{\mathcal{C}}[\Phi_{\mathrm{LOCC}}(\rho_{AB})].$$
\end{proposition}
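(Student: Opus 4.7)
The plan is to combine the convexity from Proposition~1 with the standard decomposition of an LOCC channel into a quantum instrument indexed by the classical transcript of the protocol. Any LOCC channel admits a representation
\begin{equation*}
\Phi_{\mathrm{LOCC}}(\rho_{AB}) = \sum_k p_k\, \rho^k_{AB}, \qquad \rho^k_{AB} = \frac{(A_k \otimes B_k)\, \rho_{AB}\, (A_k \otimes B_k)^\dagger}{p_k},
\end{equation*}
with local Kraus operators $A_k, B_k$ on Alice and Bob respectively, and $k$ labelling the complete classical transcript of the protocol. Since $\Phi_{\mathrm{LOCC}}(\rho_{AB})$ is a convex mixture of the $\rho^k_{AB}$, Proposition~1 reduces the desired monotonicity to the strong monotonicity inequality
\begin{equation*}
\sum_k p_k\, E_{\mathcal{C}}(\rho^k_{AB}) \leq E_{\mathcal{C}}(\rho_{AB}),
\end{equation*}
which is the statement I would attack directly.

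To establish strong monotonicity, I would begin with an optimal symmetric extension $\rho^*_{AA'BB'}$ of $\rho_{AB}$ together with the optimal local bases $\{\ket{\alpha_j}_{AA'}\ket{\beta_l}_{BB'}\}$ realising $\mathcal{C}_{\mathrm{min}}(AA':BB' \mid \rho^*) = E_{\mathcal{C}}(\rho_{AB})$. For each branch $k$ I apply the local Kraus operators $A_k$ on $A$ and $B_k$ on $B$, trivially extended by identities on the ancillas $A'$ and $B'$, producing a candidate extension $\rho^{k*}_{AA'BB'} = (A_k \otimes B_k)\rho^*(A_k \otimes B_k)^\dagger / p_k$ of $\rho^k_{AB}$. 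To reinstate the swap symmetry that is broken when $A_k \neq B_k$, I would adjoin classical registers $\ket{k}_{A''}\bra{k}$ on Alice's side and $\ket{k}_{B''}\bra{k}$ on Bob's side mirroring the shared classical transcript of the LOCC protocol, in direct analogy with the classical-register construction used in the proof of Proposition~1. After absorbing a Stinespring dilation of each local Kraus operator into these enlarged ancillas, the resulting state should lie in a swap-symmetric subspace, and its correlated coherence, evaluated in the extended local product basis obtained by appending the computational basis on $A''B''$, is bounded by $\mathcal{C}_{\mathrm{min}}(AA':BB' \mid \rho^*) = E_{\mathcal{C}}(\rho_{AB})$, because the classical registers are diagonal and so contribute no local coherence, while the Kraus dilations cannot increase the coherence by axiom~(C2).

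The main obstacle will be the symmetrization step: the raw state $(A_k \otimes B_k)\rho^*(A_k \otimes B_k)^\dagger$ is not swap-invariant on the nose, since $A_k$ and $B_k$ are in general unrelated operators. The resolution should exploit the fact that, in any LOCC protocol, the classical label $k$ is by construction shared between Alice and Bob through classical communication, so the Stinespring dilations of their conditional local operations can be arranged to yield structurally matching ancillary registers on both halves of the extension. Once a swap-symmetric extension is in place for each branch, averaging over $k$ against the distribution $p_k$ and invoking convexity one more time produces a symmetric extension of $\Phi_{\mathrm{LOCC}}(\rho_{AB})$ whose correlated coherence is at most $E_{\mathcal{C}}(\rho_{AB})$, which is exactly the monotonicity inequality claimed in the proposition.
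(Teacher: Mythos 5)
Your reduction via Proposition~1 to the strong-monotonicity statement $\sum_k p_k E_{\mathcal{C}}(\rho^k_{AB}) \leq E_{\mathcal{C}}(\rho_{AB})$ is logically sound, but the step you yourself flag as ``the main obstacle'' is precisely where the argument is missing, and it is not a technicality. For a branch of an LOCC protocol the operators $A_k$ and $B_k$ are generically unrelated (e.g.\ Bob measures while Alice applies a conditional correction), so $(A_k\otimes B_k)\rho^*_{AA'BB'}(A_k\otimes B_k)^\dagger$ is not swap-symmetric, and appending classical flags $\ket{k}_{A''}\bra{k}\otimes\ket{k}_{B''}\bra{k}$ does not restore symmetry: the flags are symmetric but the underlying conditional states on $AA'$ and $BB'$ are still images under two different maps. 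The assertion that Stinespring dilations ``can be arranged to yield structurally matching ancillary registers on both halves'' is exactly the claim that needs proof, and it cannot hold at the level of a single, whole-transcript Kraus branch. The paper avoids this by never trying to symmetrize a finished branch: it processes the protocol round by round. A local instrument on Bob's side is handled by dilation plus the observation that any symmetric extension of the dilated state is automatically a symmetric extension of the partially traced state (so the minimization set only enlarges and $E_{\mathcal{C}}$ cannot increase); the classical message is then moved across the cut by exploiting the swap symmetry of the \emph{optimal} extension itself --- paired CNOT copies onto fresh registers $M_A, M_B$ on both sides commute appropriately with the swap, yielding a symmetric extension of the state in which Alice holds the classical outcome; Alice's conditional operation is then treated like Bob's. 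That CNOT-copying step, which makes classical communication free, is the crux of the proof and has no counterpart in your sketch; without it (or an equivalent mechanism) the strong-monotonicity inequality you reduced to remains unproven.

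A second, smaller problem: you invoke axiom (C2) to claim that ``the Kraus dilations cannot increase the coherence.'' (C2) only constrains non-coherence-producing maps with respect to the fixed reference basis; the local operations appearing in an arbitrary LOCC protocol are not incoherent operations, so (C2) gives you nothing here. Note that the paper's proof never appeals to (C2) at all --- consistently with its operation-independent philosophy, every inequality is obtained either from invariance of $E_{\mathcal{C}}$ under appending local pure ancillas and local unitaries, or from the set-inclusion argument that one state's symmetric extensions form a subset of another's. If you want to repair your branch-wise approach, you will in effect have to reconstruct that round-by-round bookkeeping, at which point it coincides with the paper's proof.
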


\begin{proof}
Any LOCC operation can always be decomposed into some local quantum operation, a communication of classical information stored in a classical register, and finally, another local operation that is dependent on the classical information received.

Let us suppose that Bob, representing the subsystem $B$ is the one who will communicate classical information to Alice, representing subsystem $A$. His local operation can always be represented by adding ancillary subsystems $B'B''$ in some initial pure state $\ket{0}_{B'}\bra{0} \otimes \ket{0}_{B''}\bra{0}$, followed by a unitary operation on all of the subsystems on his side. Without any loss in generality, we will assume $B''$ will contain all the classical information (i.e. it is a classical register) after the unitary is performed, and $B'$ is traced out. Bob will then communicate this classical information to Alice, who will then perform some quantum operation depending on the information she received.

Based on the above, we have the following chain in inequalities:

\begin{align}
E_{\mathcal{C}}(\rho_{AB})&=  E_{\mathcal{C}}( \rho_{AB} \otimes \ket{0}_{B'}\bra{0} \otimes \ket{0}_{B''}\bra{0}) \\
&=  E_{\mathcal{C}}( U_{BB'B''}\rho_{AB} \otimes \ket{0}_{B'}\bra{0} \otimes \ket{0}_{B''}\bra{0}U_{BB'B''}^\dag) \\
&\geq E_\mathcal{C}(\sum_i K_B^i \rho^{*}_{AB}K_B^{i\dag} \otimes \ket{i}_{B''}\bra{i})
\end{align} where the last line makes use of the observation that a symmetric extension of the argument in Line 6 is also a symmetric extension of the argument of Line 7.

From the above, we see that a local POVM performed on Bob's side necessarily decreases the $E_\mathcal{C}$. The next part of the protocol requires Bob to communicate the classical information in the register $B''$ over to Alice. We need to demonstrate that this can be done for free, without increasing the correlated coherence. 

To see this, let $\sigma_{AA'A''BB'B''}^*$ be the optimal symmetric extension of $\sum_i K_B^i \rho^{*}_{AB}K_B^{i\dag} \otimes \ket{i}_{B''}\bra{i}$. We then have $E_\mathcal{C}(\sum_i K_B^i \rho^{*}_{AB}K_B^{i\dag} \otimes \ket{i}_{B''}\bra{i}) = \mathcal{C}_{\mathrm{min}}(AA'A'':BB'B'' \mid \sigma_{AA'A''BB'B''}^*)$. Recall that the register $B''$ stores the classical information of Bob's POVM outcomes. By definition, $\sigma_{AA'A''BB'B''}^*$ must be a symmetric extension, so there exists a local unitary that Alice can perform such that $U_{AA'A''}\sigma_{AA'A''BB'B''}^* U_{AA'A''}^\dag = \Phi_{\mathrm{SWAP}}^{AA'A'' \leftrightarrow BB'B''}(U_{AA'A''}\sigma_{AA'A''BB'B''}^* U_{AA'A''}^\dag)$. Since local unitaries do no affect the measure $E_\mathcal{C}$, so we will assume that $\sigma_{AA'A''BB'B''}^*$ is itself already symmetric.

Suppose we add add registers, denoted $M_{A}$ and $M_{B}$, initialized in the state $\ket{0}_{M_A}$ and $\ket{0}_{M_B}$, and locally copy the classical information on  registers $A''$ and $B''$ via CNOT operations $U_{\mathrm{CNOT}}^{M_AA''}$ and $U_{\mathrm{CNOT}}^{M_BB''}$. This results in the state $$\mathcal{U}_{\mathrm{CNOT}}^{M_AA''} \circ \mathcal{U}_{\mathrm{CNOT}}^{M_BB''}(\ket{0}_{M_A}\bra{0} \otimes  \sigma_{AA'A''BB'B''}^* \otimes \ket{0}_{M_B}\bra{0}),$$ where $\mathcal{U}_{\mathrm{CNOT}}^{AB}(\rho_{AB}) \coloneqq U_{\mathrm{CNOT}}^{AB} \rho_{AB} U_{\mathrm{CNOT}}^{AB \dag}$. Note that as identical unitary operations are performed on Alice's and Bob's side, the above state is symmetric since $\sigma_{AA'A''BB'B''}^*$ is symmetric. Due to symmetry, we must have the following chain of equalities:

\begin{align} &\mathcal{U}_{\mathrm{CNOT}}^{M_AA''} \circ \mathcal{U}_{\mathrm{CNOT}}^{M_BB''}(\ket{0}_{M_A}\bra{0} \otimes  \sigma_{AA'A''BB'B''}^* \otimes \ket{0}_{M_B}\bra{0}) \\& \;= \Phi_{\mathrm{SWAP}}^{A''\leftrightarrow B''} \circ \mathcal{U}_{\mathrm{CNOT}}^{M_AA''} \circ \mathcal{U}_{\mathrm{CNOT}}^{M_BB''}[\ket{0}_{M_A}\bra{0} \otimes  \Phi_{\mathrm{SWAP}}^{A''\leftrightarrow B''} (\sigma_{AA'A''BB'B''}^*) \otimes \ket{0}_{M_B}\bra{0}] \\ & \;= \mathcal{U}_{\mathrm{CNOT}}^{M_AB''} \circ \mathcal{U}_{\mathrm{CNOT}}^{M_BA''}(\ket{0}_{M_A}\bra{0} \otimes  \sigma_{AA'A''BB'B''}^* \otimes \ket{0}_{M_B}\bra{0})\end{align} where Equation~10 uses the fact that $\Phi_{\mathrm{SWAP}}^{A\leftrightarrow B}(\rho_{AB}) = U_{\mathrm{SWAP}}^{A\leftrightarrow B} \rho_{AB} U_{\mathrm{SWAP}}^{A\leftrightarrow B\;  \dag}$, $ U_{\mathrm{SWAP}}^{A\leftrightarrow B} =  U_{\mathrm{SWAP}}^{A\leftrightarrow B\;  \dag}$ and $U_{\mathrm{SWAP}}^{B\leftrightarrow C} U_{\mathrm{CNOT}}^{AB} U_{\mathrm{SWAP}}^{B\leftrightarrow C \; \dag} = U_{\mathrm{CNOT}}^{AC}$. One can verify that Equation~10 is a symmetric extension of $\sum_i \ket{i}_{M_A}\bra{i}\otimes K_B^i \rho^{*}_{AB}K_B^{i\dag} \otimes \ket{i}_{B''}\bra{i}$, which is just the state if Bob communicates the classical information on the register $B''$ to Alice. As such, we determine that the copying of classical information to Alice cannot increase the measure, so we have $$E_\mathcal{C}(\sum_i K_B^i \rho^{*}_{AB}K_B^{i\dag} \otimes \ket{i}_{B''}\bra{i}) \geq E_\mathcal{C}(\sum_i \ket{i}_{M_A}\bra{i}\otimes K_B^i \rho^{*}_{AB}K_B^{i\dag} \otimes \ket{i}_{B''}\bra{i}).$$ This is already sufficient for us to prove that $E_\mathcal{C}$ cannot increase under classical communication.

Continuing from where we left off:

\begin{align}
E_{\mathcal{C}}(\rho_{AB})& \geq E_\mathcal{C}(\sum_i K_B^i \rho^{*}_{AB}K_B^{i\dag} \otimes \ket{i}_{B''}\bra{i}) \\ 
&\geq E_\mathcal{C}(\sum_i  \ket{i}_{A''}\bra{i} \otimes K_B^i \rho^{*}_{AB}K_B^{i\dag} \otimes \ket{i}_{B''}\bra{i}) \\ 
&= E_\mathcal{C}(\ket{0}_{A'}\bra{0} \otimes \sum_i  \ket{i}_{A''}\bra{i} \otimes K_B^i \rho^{*}_{AB}K_B^{i\dag} \otimes \ket{i}_{B''}\bra{i}) \\ 
&= E_\mathcal{C}(U_{AA'A''}\ket{0}_{A'}\bra{0} \otimes \sum_i  \ket{i}_{A''}\bra{i} \otimes K_B^i \rho^{*}_{AB}K_B^{i\dag} \otimes \ket{i}_{B''}\bra{i} U_{AA'A''}^\dag) \\
 &\geq E_\mathcal{C}( \sum_{i,j}  \ket{i}_{A''}\bra{i} \otimes K_A^{i,j}K_B^i \rho^{*}_{AB}K_B^{i\dag} K_A^{i,j\dag} \otimes \ket{i}_{B''}\bra{i} )
\end{align} where in Line 11 and 12, we used the fact that local operations and lassical communication cannot increase $E_\mathcal{C}$, and in Line 15, the inequality is because every symmetric extension of the argument in Line 14 is also a symmetric extension of the argument in line 15. The final line says that when Alice performs an operation conditioned on the classical communication by Bob, the measure also does not increase.

From the above arguments, we see that any local POVM performed by Bob, followed by a communication of classical measurement outcomes to Alice, and ended by another local quantum operation by Alice conditioned on the classical communication necessarily cannot increase $E_\mathcal{C}$. Since any LOCC protocol is a series of such procedures between Alice and Bob, possibly with their roles reversed, this implies that $E_\mathcal{C}$ is always contractive under LOCC operations.
 
\end{proof}

In sum, the Propositions directly imply the following theorem, which is the key result of this article:
\begin{theorem}
$E_\mathcal{C}$ is a valid entanglement monotone for every choice of coherence measure $\mathcal{C}$.
\end{theorem}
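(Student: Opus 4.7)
The plan is to assemble the theorem directly from Propositions~1--3, with one additional observation about local unitary invariance. The standard axioms one asks of an entanglement monotone are: (i) $E_\mathcal{C}(\rho_{AB}) \geq 0$ with equality on all separable states; (ii) invariance under local unitaries; (iii) non-increase under LOCC; and, conventionally, (iv) convexity under probabilistic mixing. Three of these four requirements are already handed to us: Proposition~2 supplies (i) (in fact the stronger faithfulness statement that vanishing characterises separability), Proposition~3 supplies (iii), and Proposition~1 supplies (iv). So the bulk of the proof is a bookkeeping step collecting these results.

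The remaining ingredient, axiom (ii), is not stated as a separate proposition, so I would include a one-line argument. The cleanest route is to note that any local unitary $U_A\otimes V_B$ is a reversible deterministic LOCC channel, so applying Proposition~3 both to $U_A\otimes V_B$ and to its inverse forces $E_\mathcal{C}(\rho_{AB}) = E_\mathcal{C}(U_A\otimes V_B\,\rho_{AB}\,U_A^\dag\otimes V_B^\dag)$. Alternatively, one can read local unitary invariance straight off the definition: the set of admissible local bases in Definition~1 transforms covariantly under local unitaries on $AA'$ and $BB'$, the class of symmetric extensions is closed under local unitaries, and $\mathcal{C}$ itself is basis-covariant, so the double minimisation defining $E_\mathcal{C}$ is manifestly invariant.

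I do not expect a genuine obstacle here, since the heavy lifting has already been done in the three propositions. The one subtle point worth flagging is the precise form of LOCC monotonicity that is delivered by Proposition~3, namely the deterministic inequality $E_\mathcal{C}(\rho_{AB}) \geq E_\mathcal{C}(\Phi_\mathrm{LOCC}(\rho_{AB}))$ for every LOCC channel $\Phi_\mathrm{LOCC}$. This is what most authors (Horodecki--Horodecki--Horodecki) require of an entanglement \emph{measure}. Should one instead demand the stronger Vidal-type property of non-increase on average across measurement branches, one can recover it by applying Proposition~3 separately to each classical branch of a selective LOCC protocol and then invoking convexity from Proposition~1 to pass from the branchwise inequality to the average. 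Either way, the theorem follows immediately from the combination of the three propositions together with the local-unitary-invariance remark above, which is precisely the claim.
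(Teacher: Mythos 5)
Your proposal is correct and matches the paper's approach: the paper simply asserts that Propositions~1--3 (convexity, faithfulness, LOCC monotonicity) directly imply the theorem, exactly the bookkeeping you carry out. Your extra remark on local unitary invariance is consistent with the paper, which already notes this invariance of correlated coherence in the preliminaries and which your reversible-LOCC argument recovers in any case.
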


We observe that if we were to choose the coherence measure to be the relative entropy of coherence, which is defined as $\mathcal{C}(\rho_{AB}) = \mathcal{S}[\Delta(\rho_{AB})]- \mathcal{S}(\rho_{AB})$ where $\Delta(\rho_{AB})$ is the completely dephased state~\cite{Baumgratz2014}, then for pure states, the measure exactly coincides with the well known entropy of entanglement. This is because pure quantum states only have trivial extensions and it always permits Schmidt decomposition of the state $\ket{\psi}_{AB} = \sum_i\sqrt{\lambda_i}\ket{i,i}_{AB}$ where we observe that the local bases $\{\ket{i}_A \}$ and $\{ \ket{i}_B \}$ satisfies the condition that $\mathcal{C}(\rho_A)=\mathcal{C}(\rho_B)=0$. We can then verify w.r.t. this basis, $\mathcal{S}[\Delta(\rho_{AB})]= \mathcal{S}(\sum_i{\lambda_i \ket{i,i}_{AB}\bra{i,i}}) = \mathcal{S}[\Tr_B(\ket{\psi}_{AB}\bra{\psi})]$ which is just the expression for the entropy of entanglement. It still remains to be proven that the local bases $\{\ket{i}_A \}$ and $\{ \ket{i}_B \}$ achieves the minimization required in the definition of correlated coherence. In fact, this is always true and is a generic property of all coherence measures, which we show in following theorem.

\begin{theorem}[$E_{\mathcal{C}}$ for pure states] \label{thm::purestates}
For any continuous coherence measure $\mathcal{C}$ and pure state $\ket{\psi}_{AB}$ with Schmidt decomposition $\ket{\psi}_{AB} = \sum_i\sqrt{\lambda_i}\ket{i,i}_{AB}$, $E_\mathcal{C}(\ket{\psi}_{AB}) = \mathcal{C}(\ket{\psi}_{AB})$ where the coherence is measured w.r.t. the local bases $\{\ket{i}_A \}$ and $\{ \ket{i}_B \}$ specified by the Schmidt decomposition.
\end{theorem}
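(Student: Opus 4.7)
The plan is to establish the equality by proving matching upper and lower bounds on $E_\mathcal{C}(\ket{\psi}_{AB})$. For the upper bound, I would use the trivial (un-extended) symmetric extension: in the Schmidt basis, $\ket{\psi}_{AB}$ is manifestly invariant under the $A \leftrightarrow B$ swap, and the marginals $\rho_A = \sum_i \lambda_i\ket{i}_A\bra{i}$ and $\rho_B = \sum_i \lambda_i\ket{i}_B\bra{i}$ are diagonal in the Schmidt bases $\{\ket{i}_A\}$ and $\{\ket{i}_B\}$. These bases therefore lie in $\mathcal{B}_{A:B}$ and directly yield $E_\mathcal{C}(\ket{\psi}_{AB}) \leq \mathcal{C}(\ket{\psi}_{AB})$ with $\mathcal{C}$ evaluated in the Schmidt basis.

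For the lower bound, the key observation is that pure states are extreme points of the set of density matrices, so any extension of $\ket{\psi}_{AB}\bra{\psi}$ must factor as $\ket{\psi}_{AB}\bra{\psi} \otimes \sigma_{A'B'}$ for some $\sigma_{A'B'}$ on the ancillas. Given any such symmetric extension and any valid product local bases $\{\ket{a_j}_A\ket{a'_k}_{A'}\}$ for $AA'$ and $\{\ket{b_l}_B\ket{b'_m}_{B'}\}$ for $BB'$, the partial trace over $A'B'$ is a non-coherence producing map sending the combined product basis to the reduced basis $\{\ket{a_j}_A\ket{b_l}_B\}$. Property (C2) then yields $\mathcal{C}(\rho_{AA'BB'}) \geq \mathcal{C}(\ket{\psi}_{AB}\bra{\psi})$ in the reduced basis, and since $\{\ket{a_j}_A\}$ and $\{\ket{b_l}_B\}$ diagonalize the marginals of $\ket{\psi}_{AB}$, this reduced basis is itself a valid element of $\mathcal{B}_{A:B}$; the right-hand side is therefore bounded below by $\mathcal{C}_{\mathrm{min}}(A:B \mid \ket{\psi}_{AB}\bra{\psi})$.

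It remains to identify $\mathcal{C}_{\mathrm{min}}(A:B \mid \ket{\psi}_{AB}\bra{\psi})$ with $\mathcal{C}(\ket{\psi}_{AB})$ in the Schmidt basis. For non-degenerate Schmidt coefficients, any valid local basis differs from the Schmidt basis only by diagonal phase unitaries, which are incoherent operations that preserve $\mathcal{C}$, so every valid basis yields the same value. The degenerate case then follows by approximating $\ket{\psi}_{AB}$ by a sequence of pure states with non-degenerate Schmidt spectra and invoking the continuity hypothesis on $\mathcal{C}$.

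The main subtlety, and where I expect most of the technical work, is that the definition of $\mathcal{C}_{\mathrm{min}}$ allows arbitrary orthonormal bases on $AA'$ and $BB'$ in which the local marginals are diagonal, not merely product bases; such non-product diagonalizing bases become available precisely when $\rho_A \otimes \sigma_{A'}$ or its $B$-counterpart has a degenerate spectrum. My plan is to dispose of this via a perturbation/continuity argument: any non-product diagonalizing basis can be viewed as a limit of product diagonalizing bases of slightly perturbed extensions, so continuity of $\mathcal{C}$ reduces the general case to the product-basis analysis above. This is the step that makes essential use of the continuity hypothesis in the theorem's statement.
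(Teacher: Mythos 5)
Your upper bound and the product-basis part of your lower bound are fine, and the partial-trace monotonicity step is a nice way of making precise the paper's terse remark that extensions of pure states are trivial. The genuine gap is in how you propose to handle degeneracy, in both places where it appears. First, continuity of $\mathcal{C}$ alone does not give you the lower bound at a degenerate state: the feasible set of local bases (those with vanishing marginal coherence) expands discontinuously exactly at degenerate spectra, so $\mathcal{C}_{\mathrm{min}}$ and $E_{\mathcal{C}}$ are constrained minima over a set that jumps, and a priori their value at the degenerate point could be strictly \emph{smaller} than the limit of the nondegenerate values along your approximating sequence; approximating $\ket{\psi}_{AB}$ by nondegenerate states and "invoking continuity" therefore does not close the argument. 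Second, your proposed fix for non-product diagonalizing bases on $AA'$ and $BB'$ is false as stated: a non-product eigenbasis of a degenerate product marginal $\rho_A\otimes\sigma_{A'}$ is \emph{not} a limit of product diagonalizing bases of slightly perturbed extensions, since an entangled basis vector stays at finite distance from every product vector, while perturbing the extension only shrinks the feasible set down to (essentially unique) product eigenbases. So both degeneracy subtleties are left open by the perturbation/continuity plan.

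The paper closes exactly this gap with an ingredient you never invoke: LOCC monotonicity of $E_{\mathcal{C}}$ (Proposition~\ref{thm::monotonicity}) combined with Nielsen's majorization theorem. One perturbs the Schmidt spectrum to a nondegenerate spectrum $\lambda_i(\epsilon)$ chosen so that $\rho_A \prec \rho_A(\epsilon)$; Nielsen's theorem then gives an LOCC map taking $\ket{\psi}_{AB}$ to $\ket{\psi(\epsilon)}_{AB}$, whence $E_{\mathcal{C}}(\ket{\psi}_{AB}) \geq E_{\mathcal{C}}(\ket{\psi(\epsilon)}_{AB}) = \mathcal{C}(\ket{\psi(\epsilon)}_{AB})$ by the already-settled nondegenerate case, and the squeeze $\mathcal{C}(\ket{\psi(\epsilon)}_{AB}) \leq E_{\mathcal{C}}(\ket{\psi}_{AB}) \leq \mathcal{C}(\ket{\psi}_{AB})$ uses continuity of $\mathcal{C}$ only on explicit states, never on the minimized quantity. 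To repair your proof you should either import this monotonicity-plus-majorization step, or give a direct argument that rotating bases within degenerate eigenspaces (including entangling rotations between system and ancilla) cannot push the coherence below the Schmidt-basis value; without one of these, the degenerate case is unproved.
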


\begin{proof}
Consider the Schmidt decomposition $\ket{\psi}_{AB} = \sum_i\sqrt{\lambda_i}\ket{i,i}_{AB}$ for any pure state. We don't need to consider extensions since every pure state only has trivial extensions. Suppose the coefficients are nondegenerate, in the sense that $\lambda_i \neq \lambda_j$ if $i\neq j$. If we were to perform a partial trace, we see that $\rho_A = \Tr_B(\ket{\psi}_{AB}\bra{\psi})= \sum_i \lambda_i \ket{i}_A\bra{i}$. As the the coefficients are nondegenerate, this implies that $\{\ket{i}_A\}$ (up to an overall phase factor) is the unique local basis satisfying $\mathcal{C}(\rho_A)=0$.  Identical arguments also apply for subsystem $B$. As such, the local bases $\{\ket{i}_A \}$ and $\{ \ket{i}_B \}$ necessarily achieves the minimum for the correlated coherence, i.e. $\mathcal{C}_{\mathrm{min}}(A:B \mid \ket{\psi}_{AB})$ and $E_\mathcal{C}(\ket{\psi}_{AB})$ are just the coherence $\mathcal{C}(\ket{\psi}_{AB})$ w.r.t. the local bases $\{\ket{i}_A \}$ and $\{ \ket{i}_B \}$.

The above demonstrates that the local bases defined by the Schmidt decomposition achieves the necessary minimization when the coefficients are nondegenerate. We now extend the arguments to the more general case. Consider now a general Schmidt decomposition $\ket{\psi}_{AB} = \sum_i\sqrt{\lambda_i}\ket{i,i}_{AB}$. In this case, even if the coefficients are degenerate, the local bases $\{\ket{i}_A \}$ and $\{ \ket{i}_B \}$ nonetheless satisfies the constraints  $\mathcal{C}(\rho_A)=\mathcal{C}(\rho_B)=0$ so $\mathcal{C}(\ket{\psi}_{AB})$ w.r.t. this basis is at least an upper bound to $E_{\mathcal{C}}(\ket{\psi}_{AB})$. 

Consider again the partial trace $\rho_A = \sum_i \lambda_i \ket{i}_A\bra{i}$. Without any loss in generality, we will assume that $\lambda_i$ is in decreasing order, so that if $j\geq i$ then $\lambda_j \leq \lambda_i$. Suppose one of the coefficient is $m$-degenerate, so that for some $k$, $\lambda_k > \lambda_{k+1}= \ldots= \lambda_{k+m} > \lambda_{k+m+1}$. Note the strict inequality on both ends. We now consider a slightly perturbed state $\ket{\psi(\epsilon)}_{AB} = \sum_i\sqrt{\lambda_i(\epsilon)}\ket{i,i}_{AB}$ where $\lambda_i(\epsilon) = \lambda_i-\epsilon \lfloor i- k - \frac{m}{2} \rfloor$ whenever $k<i<k+m+1$ and $\lambda_i(\epsilon) = \lambda_i$ otherwise. The corresponding partial trace is denoted $\rho_A(\epsilon) = \sum_i \lambda_i(\epsilon) \ket{i}_A\bra{i}$. For sufficiently small $\epsilon >0$, we can verify that the majorization condition $\rho_A \prec \rho_A(\epsilon)$ is satisfied, which due to Nielsen's theorem \cite{Nielsen2010}, implies that there exists some LOCC operation $\Phi_{\mathrm{LOCC}}$ that performs the transformation  $\ket{\psi}_{AB} \rightarrow\ket{\psi(\epsilon)}_{AB}$. From Theorem~\ref{thm::monotonicity}, we know that this implies $E_\mathcal{C}(\ket{\psi}_{AB}) \geq E_\mathcal{C}\left(\ket{\psi(\epsilon)}_{AB} \right)$ since the quantity cannot increase under LOCC operations. At the same time, for sufficiently small $\epsilon>0$, the coefficients $\lambda_i(\epsilon)$ are non-degenerate so $E_\mathcal{C}(\ket{\psi(\epsilon)}_{AB}) = \mathcal{C}(\ket{\psi(\epsilon)}_{AB})$ w.r.t. the local bases $\{\ket{i}_A \}$ and $\{ \ket{i}_B \}$. This implies that $\mathcal{C}(\ket{\psi(\epsilon)}_{AB}) \leq E_\mathcal{C}(\ket{\psi}_{AB}) \leq \mathcal{C}(\ket{\psi}_{AB})$. In the limit $\epsilon \rightarrow 0$, $\mathcal{C}(\ket{\psi(\epsilon)}_{AB})\rightarrow \mathcal{C}(\ket{\psi}_{AB})$ , so by the squeeze theorem we must have that $E_\mathcal{C}(\ket{\psi}_{AB}) = \mathcal{C}(\ket{\psi}_{AB})$, where the implied basis is given by  $\{\ket{i}_A \}$ and $\{ \ket{i}_B \}$. We have considered the case where only one coefficient has $m$-degeneracy, but the same arguments can just be repeated as necessary for every coefficient that has degeneracy, which is sufficient to prove the general case.
\end{proof}

Theorem~\ref{thm::purestates} reveals that for every coherence measure and pure bipartite state, then there is always a basis where the coherence exactly quantifies the entanglement. In a more practical sense, it also shows that for every coherence measure that is computable, there corresponds a computable entanglement measure for pure states. Previously, we have already seen that the relative entropy of coherence, which has a closed form expression, corresponds to the entropy of entanglement for pure states. We can similarly choose the $l1$ norm of coherence, for which we get the simple closed form formula $E_{\mathcal{C}}(\ket{\psi}_{AB}) = \sum_{i \neq j } \sqrt{\lambda_i\lambda_j}$ where $\ket{\psi}_{AB} = \sum_i\sqrt{\lambda_i}\ket{i,i}_{AB}$. Theorem~\ref{thm::purestates} states that this expression is also a valid entanglement monotone for pure states. In general, there exists an infinite number of computable coherence measures. We also note that once one has an entanglement monotone for pure states, then it is possible to generalize it to mixed states via a convex roof construction~\cite{Horodecki2001}, which provides yet another avenue for generating new entanglement measures from coherence measures.

\section{Asymmetric quantifiers of quantum correlations}

In the previous section, the symmetric portion of the correlated coherence was considered, in which case it was found to directly address the entangled part of quantum correlations. We now show that simply dropping the requirement of symmetry naturally leads to discord-like measures of correlations.

For quantum discord, the set of states that has zero discord, and are thus ``classical", are the set of classical-quantum states which can be written in the form $\rho_{AB}=\sum_i p_i \ket{i}_A\bra{i}\otimes \rho_B^i$. One may readily define this set of classical quantum states by considering extensions without the requirement that the extension is symmetric. Let us consider the following:

\begin{definition} [Asymmetric discord of coherence]
The asymmetric discord of coherence, for any given coherence measure $\mathcal{C}$, is defined to be the following quantity: $$D_{\mathcal{C}}(\rho_{AB}) = \min_{B'}\mathcal{C}_{\mathrm{min}}(A:BB' \mid \rho_{ABB'})$$ where the minimization is performed over all possible extensions satisfying $\Tr_{B'}(\rho_{ABB'}) = \rho_{AB}$.
\end{definition}

We can then observe that this always defines a discord-like quantifier for every coherence measure $\mathcal{C}$.

\begin{theorem}
$D_{\mathcal{C}}(\rho_{AB}) = 0$ iff $\rho_{AB}$ is classical-quantum, i.e. the state can be written as $\rho_{AB}= \sum_i p_i \ket{i}_A\bra{i}\otimes \rho_B^i$ where $\{\ket{i}_A\}$ is some orthonormal set. It is strictly positive otherwise.
\end{theorem}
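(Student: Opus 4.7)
The plan is to mirror the structure of Proposition 2, proving three things: nonnegativity of $D_{\mathcal{C}}$, that every classical-quantum state admits an extension with vanishing correlated coherence, and that a vanishing $D_{\mathcal{C}}$ forces $\rho_{AB}$ into classical-quantum form. Nonnegativity is immediate: on the feasible set of the inner minimization one has $\mathcal{C}(\rho_A)=\mathcal{C}(\rho_{BB'})=0$, so the objective reduces to $\mathcal{C}(\rho_{ABB'})$, which is nonnegative by axiom (C1).

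For the forward direction, assume $\rho_{AB}=\sum_i p_i \ket{i}_A\bra{i}\otimes \rho_B^i$, take spectral decompositions $\rho_B^i=\sum_j q_{ij}\ket{b_{ij}}_B\bra{b_{ij}}$, and introduce an ancilla $B'$ with an orthonormal set $\{\ket{i,j}_{B'}\}$ indexed by the pairs $(i,j)$. I would propose the extension
\[
\rho_{ABB'}=\sum_{i,j} p_i q_{ij}\,\ket{i}_A\bra{i}\otimes\ket{b_{ij}}_B\bra{b_{ij}}\otimes\ket{i,j}_{B'}\bra{i,j}.
\]
The orthonormality of the $\ket{i,j}_{B'}$ ensures that $\{\ket{b_{ij}}_B\ket{i,j}_{B'}\}_{(i,j)}$ is genuinely orthonormal even if the $\ket{b_{ij}}_B$ overlap across different $i$, so the state is diagonal in the product local basis $\{\ket{i}_A\}\otimes\{\ket{b_{ij}}_B\ket{i,j}_{B'}\}$. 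A direct check shows both marginals are diagonal in the induced local bases; by (C1) this makes $\mathcal{C}(\rho_A)=\mathcal{C}(\rho_{BB'})=\mathcal{C}(\rho_{ABB'})=0$, hence $\mathcal{C}_{\mathrm{min}}(A:BB'\mid\rho_{ABB'})=0$ and $D_{\mathcal{C}}(\rho_{AB})=0$. Finally, $\Tr_{B'}(\rho_{ABB'})=\rho_{AB}$, so this is a legitimate extension.

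For the reverse direction, assume $D_{\mathcal{C}}(\rho_{AB})=0$. Then some extension $\rho_{ABB'}$ and some local bases $\{\ket{i}_A\},\{\ket{\beta_k}_{BB'}\}$ achieve $\mathcal{C}(\rho_A)=\mathcal{C}(\rho_{BB'})=0$ together with $\mathcal{C}(\rho_{ABB'})=0$. By (C1), $\rho_{ABB'}$ is diagonal in the corresponding product basis, so it can be regrouped as $\rho_{ABB'}=\sum_i p_i\ket{i}_A\bra{i}\otimes\sigma_{BB'}^i$ with $\sigma_{BB'}^i$ diagonal in $\{\ket{\beta_k}\}$. Tracing out $B'$ yields $\rho_{AB}=\sum_i p_i\ket{i}_A\bra{i}\otimes\Tr_{B'}(\sigma_{BB'}^i)$, which is manifestly classical-quantum. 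Combining this with the nonnegativity established at the start gives strict positivity for every non-classical-quantum state.

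The main obstacle is really a bookkeeping one, located entirely in the forward direction: one must exhibit an extension together with local bases such that all three coherences $\mathcal{C}(\rho_A), \mathcal{C}(\rho_{BB'}), \mathcal{C}(\rho_{ABB'})$ simultaneously vanish. The non-trivial move is to let the register $B'$ carry both indices $(i,j)$, which is what produces an honest orthonormal local basis on $BB'$ without demanding orthogonality between $\ket{b_{ij}}_B$ from different blocks $i$. Once this construction is in place, the rest is a direct appeal to (C1) and the definition of $\mathcal{C}_{\mathrm{min}}$, and no LOCC-type argument of the kind used in Proposition~\ref{thm::monotonicity} is needed.
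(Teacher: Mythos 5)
Your proposal is correct and follows essentially the same route as the paper: the same extension $\rho_{ABB'}$ with the register $B'$ carrying the pair index $(i,j)$ for the forward direction, the same regrouping of a product-basis-diagonal extension followed by tracing out $B'$ for the converse, and nonnegativity from (C1). Your write-up is in fact slightly more explicit than the paper's about why the $BB'$ basis is orthonormal and why all three coherences vanish simultaneously, but the underlying argument is identical.
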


\begin{proof}
First, suppose $\rho_{AB}= \sum_i p_i \ket{i}_A\bra{i}\otimes \rho_B^i$. Writing $\rho_B^i$ in terms of its pure state decomposition, we have $$\rho_{AB}= \sum_i p_i \ket{i}_A\bra{i}\otimes \sum_j q_{ij}\ket{\beta_j}_B\bra{\beta_j}.$$ This state always permits an extension on Bob's side of the form $$\rho_{ABB'}= \sum_i p_i \ket{i}_A\bra{i}\otimes \sum_j q_{ij}\ket{\beta_j}_B\bra{\beta_j}\otimes \ket{i,j}_{B'}\bra{i,j}$$ for which $\mathcal{C}_{\mathrm{min}}(A:BB' \mid \rho_{ABB'})=0$ and so $D_{\mathcal{C}}(\rho_{AB})=0$.

Conversely, if $D_{\mathcal{C}}(\rho_{AB})=0$ then this implies that we can write $\rho_{ABB'} = \sum_i p_i \ket{i}_A\bra{i} \otimes \ket{\beta_i}_{BB'}\bra{\beta_i}$, is is a classical-quantum state and will remain classical-quantum even if we trace out the subsystem $B'$. This proves the converse statement so we must have  $D_{\mathcal{C}}(\rho_{AB}) = 0$ iff $\rho_{AB}$ is classical-quantum. 

Since $D_{\mathcal{C}}(\rho_{AB})$ is a coherence measure and so is nonnegative, and $D_{\mathcal{C}}(\rho_{AB}) = 0$ iff $\rho_{AB}$ is classical-quantum, we must have that for any non classical-quantum state,  $D_{\mathcal{C}}(\rho_{AB})> 0$. This completes the proof.
  
\end{proof}

The most general notion of nonclassical correlations is one where the set of classical states is the set of classical-classical states, or completely classical states. These are quantum states that can always be written in the form $\rho_{AB}=\sum_{i,j} p_{ij} \ket{i}_A\bra{i}\otimes \ket{j}_B\bra{j}$. This can be directly addressed via the correlated coherence itself, without consideration of any extensions of states, which is the natural end point of the further relaxation of the constraints that were previously considered in $E_{\mathcal{C}}$ and $D_\mathcal{C}$

\begin{theorem}
$\mathcal{C}_{\mathrm{min}}(A:B \mid \rho_{AB}) = 0$ iff $\rho_{AB}$ is classical-classical, i.e. the state can be written as $\rho_{AB}=\sum_{i,j} p_{ij} \ket{i}_A\bra{i}\otimes \ket{j}_B\bra{j}$ where $\{\ket{i}_A\}$ and $\{\ket{j}_B\}$ are some orthonormal sets. It is strictly positive otherwise.
\end{theorem}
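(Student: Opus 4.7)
The plan is to prove both directions essentially by unpacking the definition of $\mathcal{C}_{\mathrm{min}}$ together with property (C1), which says that any valid coherence measure vanishes exactly on diagonal states. Non-negativity of $\mathcal{C}_{\mathrm{min}}$ is immediate since each of $\mathcal{C}(\rho_{AB})$ has $\mathcal{C}(\rho_A) = \mathcal{C}(\rho_B) = 0$ subtracted from it only on the constrained set $\mathcal{B}_{A:B}$, and on that set the quantity reduces to the non-negative number $\mathcal{C}(\rho_{AB})$.

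For the forward direction I would assume $\rho_{AB} = \sum_{i,j} p_{ij}\ket{i}_A\bra{i} \otimes \ket{j}_B\bra{j}$ and select the obvious product basis $\{\ket{i}_A \ket{j}_B\}$ as a candidate for the minimization. The marginals $\rho_A = \sum_i \bigl(\sum_j p_{ij}\bigr)\ket{i}_A\bra{i}$ and $\rho_B$ are diagonal in $\{\ket{i}_A\}$ and $\{\ket{j}_B\}$ respectively, so by (C1) the local coherences vanish, confirming that this basis lies in $\mathcal{B}_{A:B}$. The joint state is diagonal in the product basis, so $\mathcal{C}(\rho_{AB}) = 0$ as well, giving $\mathcal{C}(A:B\mid \rho_{AB}) = 0$ on this particular basis. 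Since $\mathcal{C}_{\mathrm{min}}$ is the minimum over $\mathcal{B}_{A:B}$ of a non-negative quantity, the value must be exactly zero.

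For the converse, suppose $\mathcal{C}_{\mathrm{min}}(A:B\mid\rho_{AB}) = 0$. Assuming the infimum is attained (which holds in finite dimension by compactness of the set of orthonormal bases and continuity of $\mathcal{C}$), there exists a local basis $\{\ket{\alpha_i}_A \ket{\beta_j}_B\} \in \mathcal{B}_{A:B}$ with $\mathcal{C}(\rho_A) = \mathcal{C}(\rho_B) = 0$ and $\mathcal{C}(\rho_{AB}) = 0$ with respect to it. Applying (C1) to the joint coherence forces $\rho_{AB}$ to be diagonal in this product basis, so $\rho_{AB} = \sum_{i,j} p_{ij}\ket{\alpha_i}_A\bra{\alpha_i} \otimes \ket{\beta_j}_B\bra{\beta_j}$, which is a classical-classical decomposition. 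Strict positivity for all other states then follows from non-negativity together with this bi-implication, exactly as in the proofs of the analogous faithfulness statements for $E_{\mathcal{C}}$ and $D_{\mathcal{C}}$.

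The only mildly subtle point, and thus the main thing I would flag, is the attainment of the infimum: if one prefers to avoid compactness arguments, one can instead use continuity of $\mathcal{C}$ together with the fact that $\mathcal{C}_{\mathrm{min}} = 0$ means there is a sequence of bases on which $\mathcal{C}(\rho_{AB}) \to 0$, pass to a convergent subsequence of bases, and conclude diagonality of $\rho_{AB}$ in the limiting product basis by continuity. Beyond that, the proof is a direct specialization of the structure already used for $E_{\mathcal{C}}$ and $D_{\mathcal{C}}$ with the extension system simply removed.
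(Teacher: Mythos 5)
Your proof is correct and follows essentially the same route as the paper's: exhibit the diagonalizing product basis in the forward direction, and in the converse extract a local basis in which $\rho_{AB}$ is diagonal via (C1), with strict positivity following from non-negativity. The only addition is your (reasonable) care about attainment of the infimum, which the paper implicitly assumes.
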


\begin{proof}
First, suppose $\rho_{AB}=\sum_{i,j} p_{ij} \ket{i}_A\bra{i}\otimes \ket{j}_B\bra{j}$. It is then immediate clear by considering the basis $\{ \ket{i}_A \ket{j}_B$ that $\mathcal{C}_{\mathrm{min}}(A:B \mid \rho_{AB}) = 0$.

Conversely, if $\mathcal{C}_{\mathrm{min}}(A:B \mid \rho_{AB}) = 0$ then this implies that we can write $\rho_{AB} = \sum_{i,j} p_{ij} \ket{i}_A\bra{i} \otimes \ket{j}_B\bra{j}$ since there must be some local basis $\{ \ket{i}_A \}$ and $\{ \ket{j}_B \}$ for which $\rho_{AB}$ is diagonal. This proves the converse statement so we must have  $\mathcal{C}_{\mathrm{min}}(A:B \mid \rho_{AB}) = 0$ iff $\rho_{AB}$ is classical-classical. 

Since $\mathcal{C}_{\mathrm{min}}(A:B \mid \rho_{AB})$ is a coherence measure and so is nonnegative, and $\mathcal{C}_{\mathrm{min}}(A:B \mid \rho_{AB}) = 0$ iff $\rho_{AB}$ is classical-classical, so we must have that for any non classical-classical state,  $\mathcal{C}_{\mathrm{min}}(A:B \mid \rho_{AB})> 0$. This completes the proof.
  
\end{proof}

We also observe that for pure bipartite states, $\mathcal{C}_{\mathrm{min}}(A:B \mid \ket{\psi}_{AB}) = D_{\mathcal{C}}(\ket{\psi}_{AB})=E_{\mathcal{C}}(\ket{\psi}_{AB})$, The discord-like quantifiers converge with entanglement, which is a known property of measures of quantum discord.
\section{Conclusion}

In the preceding sections, we presented a construction that is valid quantifier of entanglement. The construction is also generalizable to include larger classes of quantum correlations, leading to discord-like quantifiers of nonclassicality. The arguments are independent of not only the type of coherence measure used, it is also independent of the kind of non-coherence producing operation that is being considered. Such entanglement measures must therefore necessarily exist for any convex coherence quantifier that shares a common notion of classicality.  This leads to the conclusion that such constructions, and thus notions of entanglement and discord, must exist in every reasonable resource theory of coherence.

In~\cite{Streltsov2015}, it was demonstrated that for every entanglement measure, there corresponds a coherence measure. This was achieved by considering the entanglement of the state after performing some preprocessing in the form an an incoherent operation. In a sense, this article asks the converse question: Does every coherence measure correspond to some entanglement measure? The results discussed in this article proves this in the affirmative. Therefore, if one were interested in keeping count, the number of possible entanglement measures must be exactly equal to the number of coherence measures. 

The fact that entanglement can always be defined as the symmetric portion of correlated coherence also further illuminates the role that is being played by the incoherent operation in~\cite{Streltsov2015}, despite not being a crucial element for the construction of entanglement measures. Recall that incoherent operations are operations that do not produce coherence. This does not, however, preclude the moving of coherence from one portion of the Hilbert space to another. Since coherence can always be faithfully  convert coherence into entanglement, we see that the incoherent operation, in such a context, is performing the role of converting any local coherences into the symmetric portion of the correlated coherence, at least when one restrict themselves to the resource theory of coherence considered in~\cite{Baumgratz2014,Streltsov2015 }.

We hope that the discussion presented here will inspire further research into the interplay between coherence and quantum correlations.

\acknowledgements This work was supported by the National Research Foundation of Korea (NRF) through a grant funded by the Korea government (MSIP) (Grant No. 2010-0018295) and by the Korea Institute of Science and Technology Institutional Program (Project No. 2E27800-18-P043). K.C. Tan was supported by Korea Research Fellowship Program through the National Research Foundation of Korea (NRF) funded by the Ministry of Science and ICT (Grant No. 2016H1D3A1938100). We would also like to thank H. Kwon for helpful discussions.

\end{document}